\documentclass[11pt]{article}
\usepackage[margin=1in]{geometry}
\usepackage{amsmath,amssymb,amsthm}
\usepackage{booktabs}
\usepackage[hidelinks]{hyperref}
\newtheorem{theorem}{Theorem}
\newtheorem{lemma}[theorem]{Lemma}
\newtheorem{corollary}[theorem]{Corollary}
\newtheorem{observation}[theorem]{Observation}
\theoremstyle{remark}
\newtheorem{remark}[theorem]{Remark}
\newcommand{\NZ}{N_Z}
\newcommand{\ad}{a_\delta}

\title{Spanning Trees with Many Leaves in Graphs\\ of Minimum Degree at Least 7}
\author{Sogol Jahanbekam\thanks{Department of Mathematics and Statistics, San Jos\'e State University, San Jos\'e, CA; \texttt{sogol.jahanbekam@sjsu.edu}.}}
\date{}

\begin{document}
\maketitle

\begin{abstract}
We give a polynomial-time algorithm that constructs, in every connected $n$-vertex graph of minimum degree at least $7$, a spanning tree with at least $\frac{25200}{46189}n>0.5455\,n$ leaves. No bound specific to minimum degree $7$ was known: the best bound available for this class was $\frac{11}{21}n\approx0.5238\,n$, inherited from Simarova's theorem for minimum degree~$6$. The algorithm and its analysis are carried out for an arbitrary minimum degree $\delta$, and yield a recursion that gives an explicit lower bound on the number of leaves for every $\delta$. The resulting bounds improve all previously known ones for every $\delta\ge7$; for $\delta=8,9,10$ they are $0.5850\,n$, $0.6151\,n$ and $0.6413\,n$, and they are tabulated for $\delta\le25$ at the end of the paper.

\medskip\noindent\textbf{Keywords:} spanning tree, leaves, minimum degree, connected domination, polynomial-time algorithm.\\
\textbf{MSC 2020:} 05C05, 05C69, 05C85.
\end{abstract}

\section{Introduction}

All graphs here are finite, simple and connected. For a spanning tree $T$ of a graph $G$, a \emph{leaf} of $T$ is a vertex of degree $1$ in $T$. For integers $n$ and $\delta$, let $l(n,\delta)$ denote the largest integer $m$ such that every connected $n$-vertex graph of minimum degree at least $\delta$ has a spanning tree with at least $m$ leaves. Equivalently, $n-l(n,\delta)$ is the largest connected domination number of such a graph.

In 1987 Linial conjectured that $l(n,\delta)\ge\frac{\delta-2}{\delta+1}n+c_\delta$ for a constant $c_\delta$ depending only on $\delta$; the coefficient $\frac{\delta-2}{\delta+1}$ would be best possible~\cite{KW}. The conjecture is settled for $\delta\le5$: Kleitman and West~\cite{KW} proved $l(n,3)\ge\frac n4+2$ and $l(n,4)\ge\frac{2n+8}{5}$, and Griggs and Wu~\cite{GW} proved $l(n,5)\ge\frac n2+2$; see also Storer~\cite{St} for cubic graphs. For large $\delta$ the conjecture is false: Alon~\cite{A90} (see also~\cite{AS,AW}) showed by probabilistic methods that for large enough $\delta$, with high probability a random $\delta$-regular graph on $n$ vertices does not contain a spanning tree with fewer than $(1-o_\delta(1))\frac{\ln(\delta+1)}{\delta+1}n$ non-leaves, and Caro, West and Yuster~\cite{CWY} proved a matching upper bound of $(1+o_\delta(1))\frac{\ln(\delta+1)}{\delta+1}n$ for the connected domination number; see also~\cite{A23}. It is not known for which $\delta$ the conjecture first fails; already $\delta=6$ is open, the best bound there being $l(n,6)\ge\frac{11}{21}n$, proved by Simarova~\cite{Si} in 2019.

For $\delta\ge 7$ no bound of either kind is available. A graph of minimum degree at least $7$ has minimum degree at least $6$, so Simarova's theorem gives $l(n,7)\ge\frac{11}{21}n\approx0.5238n$, and this inherited bound is the best that was known. The asymptotic results are of no help here either: they involve unspecified $o_\delta(1)$ terms and become meaningful only for large $\delta$, so they yield no explicit bound for $\delta=7$.

Our main result is the following bound, which is obtained constructively.

\begin{theorem}\label{thm:7}
Let $G$ be a connected $n$-vertex graph with minimum degree at least $7$. There is a polynomial-time algorithm that finds a spanning tree of $G$ with at least $\frac{25200}{46189}n$ leaves. In particular,
\[
l(n,7)\ \ge\ \frac{25200}{46189}\,n\ >\ 0.5455\,n .
\]
\end{theorem}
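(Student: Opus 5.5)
The plan is to prove the general-$\delta$ statement announced in the abstract and then specialize. Concretely, I will construct a spanning tree by a Kleitman--West style \emph{greedy expansion} process, analysed by an amortized potential argument whose weights depend on $\delta$. The process maintains a tree $T\subseteq G$. It repeatedly applies one of a fixed finite list of expansion rules. The basic rule picks a leaf $v$ of $T$ with many neighbours outside $T$ and attaches all of them. Further rules expand a leaf and then one of its new children (a two-step look-ahead), or attach a vertex outside $T$ adjacent to several leaves. It stops when $V(T)=V(G)$. Each step is a bounded search, so the algorithm is clearly polynomial.

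The analysis uses a potential
\[
\Phi(T)=\alpha\,|L(T)|+\beta\,|D(T)|-|V(T)|\cdot\ad ,
\]
where $L(T)$ are the live leaves (those with a neighbour outside $T$), $D(T)$ the dead leaves, and $\ad$ the target ratio. I will show that every admissible expansion does not decrease $\Phi$. Since $\Phi$ of the initial star is positive, the final tree then has at least $\ad n$ leaves. For each rule, the gain in leaves is computed in terms of $d$, the number of outside neighbours of the expanded vertex, and of the number of leaves that die. The degree condition $\delta$ enters because a vertex outside $T$ with few neighbours in $T$ must have many neighbours outside $T$. The resulting inequalities form a small linear program in $(\alpha,\beta,\ad)$.

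The recursion in $\delta$ comes from the situation where no rule applies. Then every leaf has at most some threshold $k$ of outside neighbours. In that case I will pass to the graph induced on the unexplored part, where the minimum degree drops by at most one, and invoke the bound $a_{\delta-1}$ there. The two trees are then glued, and this yields an inequality of the form $\ad\ge f(a_{\delta-1},\delta)$. Optimizing the threshold $k$ and the weights should give a closed recursion. Iterating it from the base value $a_6=\tfrac{11}{21}$ (Simarova's theorem), or from a $\delta\le5$ base, should produce $a_7=\tfrac{25200}{46189}$. The factorization $46189=11\cdot13\cdot17\cdot19$ suggests a product of ratios with odd denominators, which I will use as a check on the recursion.

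The main obstacle is the stuck configurations. These are leaves whose outside degree is only $1$ or $2$, and components of $G-V(T)$ attached to the tree through very few vertices. Here a single expansion gains too little, so one must either look ahead two steps or charge the deficit to the recursive call without losing the additive constant at each level. I would first try the rule ``expand the outside vertex with the most neighbours in $L(T)$, then expand it''. I would then check the worst case of that LP explicitly for $\delta=7$, and verify that the optimum equals $\tfrac{25200}{46189}$.
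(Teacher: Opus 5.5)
The step your argument rests on fails. In a stuck configuration you recurse on $G-V(T)$ and claim its minimum degree is at least $\delta-1$. Nothing in your rules gives this, and your own look-ahead rule forces the opposite. Suppose no leaf has an outside neighbour $w$ with at least $t$ neighbours outside $T$. Then every outside vertex adjacent to a leaf has degree at most $t-1$ in $G-V(T)$. That is at most $\delta-2$ for any threshold that can ever fire at a vertex of degree $\delta$. In the paper's run for $\delta=7$ this degree is at most $4-i$ after stage $i$. Moreover $G-V(T)$ need not be connected, and attaching one subtree per component to $T$ costs leaves you never account for. So no inequality $a_\delta\ge f(a_{\delta-1},\delta)$ comes out of this, and the claim that iterating from $\frac{11}{21}$ lands on $\frac{25200}{46189}$ has no derivation behind it. You have also misread the abstract: the recursion is not in $\delta$, and Simarova's bound is never used. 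It runs over stages $i=0,\dots,3$ with $\delta=7$ fixed.

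What replaces the recursive call is a double-counting bound at each stuck point, with thresholds lowered in stages. Because every expansion attaches all outside neighbours, every $T$-neighbour of a vertex of $Z=V(G)\setminus V(T)$ is a leaf. A distance-two rule (expand $\ell$, then $u$, then $w$) ensures every vertex of $Z$ has such a neighbour. For $\delta=7$, stage $i\in\{0,1,2,3\}$ either expands a leaf and then a child having at least $5-i$ neighbours in $Z$, or expands a leaf having at least $\beta_i+1$ neighbours in $Z$, where $(\beta_i)=(2,2,1,1)$. At the end of stage $i$, each vertex of $Z$ has at most $4-i$ neighbours in $Z$, hence at least $3+i$ leaf neighbours. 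Each leaf has at most $\beta_i$ neighbours in $Z$. Counting edges between $Z$ and the leaves gives $z_i\le\frac{\beta_i}{3+i}\,y_i$. The leftover vertices of $Z$ are finally hung on leaves, which cannot lower the leaf count, so no recursive call is ever needed. Every step of stage $i\ge1$ satisfies $\Delta x\le\frac{2}{4-i}\Delta y$, and $x_0\le\frac{3}{5}y_0$. Telescoping these inside $n=x_i+y_i+z_i$, the earlier bounds $y_j\ge\eta_jn$ enter with nonnegative coefficients, and this gives $\eta_0,\dots,\eta_3=\frac{15}{34},\frac{105}{221},\frac{2625}{4862},\frac{25200}{46189}$. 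The primes $17,13,11,19$ of $46189$ come from $1+\frac{3}{5}+\frac{2}{3}=\frac{34}{15}$, $1+\frac{1}{2}+\frac{2}{3}=\frac{13}{6}$, $1+\frac{1}{5}+1=\frac{11}{5}$ and $1+\frac{1}{6}+2=\frac{19}{6}$, not from any recursion in $\delta$. Separately, your potential-function sketch never fixes a rule or checks an inequality. So even setting the recursion aside, it does not yet yield a constant.
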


\begin{center}
\begin{tabular}{lccc}
\toprule
 & previously known & this paper & Linial's conjecture\\
\midrule
$l(n,7)/n\ \ge$ & $0.5238$ (inherited from $\delta=6$) & $0.5455$ & $0.6250$\\
\bottomrule
\end{tabular}
\end{center}

\medskip
The algorithm behind Theorem~\ref{thm:7} is defined for an arbitrary minimum degree $\delta$, and its analysis gives a recursion, stated as Theorem~\ref{thm:main} in Section~\ref{sec:proof}, from which an explicit lower bound on $l(n,\delta)$ follows for every $\delta$. Theorem~\ref{thm:7} is the case $\delta=7$ of that recursion, and Corollary~\ref{cor:table} lists the bounds for $8\le\delta\le25$; each of them improves the bound $\frac{11}{21}n$ inherited from $\delta=6$, which was the best previously known for these values as well.

Finding a spanning tree with the maximum number of leaves is NP-hard, even for $4$-regular graphs~\cite{GJ}, so constructive guarantees of this kind are also of algorithmic interest; spanning trees with many leaves arise in network design and broadcasting, where internal vertices act as routers and leaves as terminals.

Section~\ref{sec:alg} describes the algorithm and Section~\ref{sec:proof} contains the analysis.

\section{The algorithm}\label{sec:alg}

Let $G$ be a connected graph with $\delta(G)=\delta\ge7$. Throughout, $T$ denotes the subtree of $G$ constructed so far, $X$ and $Y$ are its sets of non-leaves and leaves, $Z=V(G)\setminus V(T)$, and $x=|X|$, $y=|Y|$, $z=|Z|$. For a vertex $v$ we write $\NZ(v)=N(v)\cap Z$, always with respect to the current tree. Two quantities control the algorithm:
\[
\ad=\Bigl\lceil\frac{2\delta-1}{3}\Bigr\rceil,
\qquad
\beta_i=\Bigl\lceil\frac{2\delta-4-3i}{6}\Bigr\rceil
\quad(0\le i\le I_\delta),
\qquad
I_\delta=\Bigl\lceil\frac{2\delta-7}{3}\Bigr\rceil .
\]

The tree $T$ is rooted at the vertex $r$ at which the algorithm starts. Whenever a vertex $w$ of $Z$ is added to $T$ and joined to a vertex $v$ of $T$, we call $w$ a \emph{child} of $v$; thus the children of $v$ are exactly those neighbours of $v$ that lay in $Z$ at the moment they were added to $T$. Every vertex of $T$ other than $r$ has exactly one parent, so such a vertex is a leaf of $T$ if and only if it has no child, while $r$ is not a leaf because it has at least $\delta$ children.

\emph{Expanding} a vertex $v$ of $T$ means adding \emph{every} vertex of $\NZ(v)$ to $T$ as a child of $v$. The algorithm starts at a vertex $r$ of maximum degree and expands it, so that $T$ is initially a star with at least $\delta$ leaves, and every later step is a sequence of expansions. Thus a vertex acquires children only by being expanded, and it acquires all of them at once: expanding $v$ makes $v$ a non-leaf if it was a leaf, makes every vertex of $\NZ(v)$ a leaf, and changes the status of no other vertex.

\begin{observation}\label{obs:inv}
At every moment, no non-leaf of $T$ has a neighbour in $Z$. Consequently, every neighbour in $T$ of a vertex of $Z$ is a leaf of $T$.
\end{observation}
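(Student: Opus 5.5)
We prove Observation~\ref{obs:inv} by induction on the number of expansions performed. The invariant to maintain is: every non-leaf $v$ of $T$ satisfies $\NZ(v)=\emptyset$. The second sentence then follows at once: if $u\in Z$ and $w\in V(T)$ is adjacent to $u$, then $u\in\NZ(w)\neq\emptyset$. By the invariant $w$ cannot be a non-leaf, so $w$ is a leaf.

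\emph{Base case.} The algorithm starts by expanding $r$. At that moment $r$ is the only non-leaf, and because every vertex of $\NZ(r)$ has just been added to $T$, we have $\NZ(r)=\emptyset$.

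\emph{Inductive step.} Suppose the invariant holds before some vertex $v$ of $T$ is expanded. The expansion removes $\NZ(v)$ from $Z$, so it can only shrink $Z$, and hence it can only shrink $\NZ(u)$ for every vertex $u$ of $T$. Every vertex that was a non-leaf before the expansion therefore still has no neighbour in $Z$. As the description of expansion records, the only vertex whose status can change to non-leaf is $v$ itself. The newly added vertices of $\NZ(v)$ all become leaves, and no other vertex changes status.

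For $v$, every vertex of $\NZ(v)$ has been moved into $T$, so after the expansion $\NZ(v)=\emptyset$. The invariant therefore holds after each expansion. Since every step of the algorithm is a sequence of expansions, it holds at every moment.

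The main obstacle is an implicit assumption: that $T$ changes only through expansions, so that vertices are never added to $T$ singly, without the full neighbourhood of their parent. This is stipulated in Section~\ref{sec:alg} ("every later step is a sequence of expansions"). With it, the argument is a routine monotonicity check, since $Z$ only shrinks and the only new non-leaf has its $Z$-neighbourhood emptied.
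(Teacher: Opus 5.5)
Your proof is correct and is essentially the paper's argument. Both rest on three facts: expanding $v$ empties $\NZ(v)$, $Z$ only shrinks so $\NZ(v)$ stays empty, and a vertex becomes a non-leaf only by being expanded. The paper phrases this directly (a vertex with a neighbour in $Z$ was never expanded, so it has no child and is not $r$) rather than as an induction over expansions. Your caveat that $T$ must change only through expansions is well taken, since the final attachment step is not an expansion, but the observation is only needed for the trees built during the stages.
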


\begin{proof}
A vertex of $T$ other than $r$ is a non-leaf precisely when it has a child, and children are acquired only through expansion. Immediately after a vertex $v$ is expanded we have $\NZ(v)=\emptyset$, and since $Z$ only shrinks, $\NZ(v)$ remains empty; the same applies to $r$, which is expanded at the start. Hence a vertex of $T$ with a neighbour in $Z$ has never been expanded, so it has no child and differs from $r$, and is therefore a leaf.
\end{proof}

The algorithm proceeds in stages $i=0,1,\dots,I_\delta$. In stage $i$, while one of the following is available, apply one of them, in any order.

\begin{itemize}
\item[(E1$_i$)] A leaf $\ell$ and a vertex $w\in\NZ(\ell)$ with $|\NZ(w)|\ge\ad-i$: expand $\ell$, then expand $w$.
\item[(E2$_i$)] A leaf $\ell$ with $|\NZ(\ell)|\ge\beta_i+1$: expand $\ell$.
\item[(E3)] \emph{(stage $0$ only)} A vertex $w\in Z$ with no neighbour in $T$, having a neighbour $u$ with a neighbour $\ell$ in $T$ (so $\ell$ is a leaf by Observation~\ref{obs:inv}): expand $\ell$, then $u$, then $w$.
\end{itemize}

Let $T_i$ be the tree at the end of stage $i$, and let $X_i,Y_i,Z_i$ and $x_i,y_i,z_i$ denote the associated sets and their sizes, so that $x_i+y_i+z_i=n$. Every extension adds at least one vertex to $T$, so each stage ends after at most $n$ extensions and each extension is found in polynomial time.

Finally, by Lemma~\ref{lem:struct} below every vertex of $Z_{I_\delta}$ has a leaf of $T_{I_\delta}$ among its neighbours. For each $q\in Z_{I_\delta}$ choose such a leaf $\pi(q)$ and add $q$ together with the edge $q\pi(q)$. The resulting graph is a spanning tree of $G$; it is the output of the algorithm.

\begin{lemma}\label{lem:final}
The output is a spanning tree of $G$ with at least $y_{I_\delta}$ leaves.
\end{lemma}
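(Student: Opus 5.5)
The proof will check two things separately: that the output is a spanning tree, and that every leaf of $T_{I_\delta}$ is still a leaf of it. Throughout, I will take $Z_{I_\delta}$ to be an independent set of pendant additions, and I will use Lemma~\ref{lem:struct} as the input that each $q\in Z_{I_\delta}$ has a neighbour that is a leaf of $T_{I_\delta}$.

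First, the tree property. Throughout the algorithm $T$ is a tree. The initial star is a tree. Every expansion attaches each new vertex $w\in\NZ(v)$ by a single edge $vw$, where $v\in V(T)$ and $w\notin V(T)$, and a single pendant edge to a new vertex preserves being a tree. Hence $T_{I_\delta}$ is a subtree of $G$.

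Second, the final step. Each $q\in Z_{I_\delta}$ is joined by exactly one edge $q\pi(q)$, and $\pi(q)\in V(T_{I_\delta})$. The vertices of $Z_{I_\delta}$ are attached only to tree vertices, never to each other. So adding them one at a time again adds pendant vertices, and the result is a tree. Its vertex set is $X_{I_\delta}\cup Y_{I_\delta}\cup Z_{I_\delta}=V(G)$, so it is spanning. Lemma~\ref{lem:struct} guarantees that $\pi(q)$ exists, which is the only nontrivial input here.

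Third, the leaf count. Each $q\in Z_{I_\delta}$ has degree $1$ in the output, so it is a leaf. A leaf $\ell$ of $T_{I_\delta}$ keeps degree $1$ unless $\ell=\pi(q)$ for some $q$. Vertices of $X_{I_\delta}$ only stay non-leaves, so their status does not matter for a lower bound.

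The counting step is therefore the main obstacle. Every $q$ is a leaf, and the set of leaves of $T_{I_\delta}$ that are used as some $\pi(q)$ becomes non-leaves. Let $P=\pi(Z_{I_\delta})$. Then
\[
\#\text{leaves}\ \ge\ (y_{I_\delta}-|P|)+z_{I_\delta}\ \ge\ y_{I_\delta},
\]
since $|P|\le z_{I_\delta}$ because $\pi$ is a function on $Z_{I_\delta}$. Put differently, each leaf of $T_{I_\delta}$ that is turned into an internal vertex is paid for by at least one new leaf $q$ with $\pi(q)$ equal to it. Counting leaves this way is enough. The statement needs only $y_{I_\delta}$, so no finer analysis of how $\pi$ is chosen is required.
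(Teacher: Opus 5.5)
Your proof is correct and follows the paper's argument: attach each $q\in Z_{I_\delta}$ as a pendant vertex at $\pi(q)$, supplied by Lemma~\ref{lem:struct}, and count $y_{I_\delta}-|\pi(Z_{I_\delta})|+z_{I_\delta}\ge y_{I_\delta}$; you additionally spell out why $T$ remains a tree throughout, which the paper takes for granted. One wording point: $Z_{I_\delta}$ need not be an independent set in $G$, since Lemma~\ref{lem:struct}(b) allows edges inside it, and what you actually use is only that the final step adds none of those edges.
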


\begin{proof}
Every vertex of $Z_{I_\delta}$ is joined to a vertex of $T_{I_\delta}$, so the output is a spanning tree. Its leaves are the vertices of $Z_{I_\delta}$ together with the leaves of $T_{I_\delta}$ that are not in $\pi(Z_{I_\delta})$, so their number is $y_{I_\delta}-|\pi(Z_{I_\delta})|+z_{I_\delta}\ge y_{I_\delta}$.
\end{proof}

\section{Analysis}\label{sec:proof}

\begin{lemma}\label{lem:struct}
Let $0\le i\le I_\delta$. Every vertex of $Z_i$ has a neighbour in $T_i$, and all its neighbours in $T_i$ are leaves. Moreover:
\begin{enumerate}
\item[(a)] every vertex of $Y_i$ has at most $\beta_i$ neighbours in $Z_i$;
\item[(b)] every vertex of $Z_i$ has at most $\ad-i-1$ neighbours in $Z_i$, and hence at least $\delta-\ad+i+1$ neighbours in $Y_i$.
\end{enumerate}
\end{lemma}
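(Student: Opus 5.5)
The plan is to read each claim off the stopping conditions of the stages, together with Observation~\ref{obs:inv} and the fact that $Z$ only shrinks.

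\emph{Neighbours in $T_i$ are leaves.} This is exactly Observation~\ref{obs:inv} applied to $T_i$.

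\emph{Every $w\in Z_i$ has a neighbour in $T_i$.} First I will treat $i=0$. Suppose $w\in Z_0$ has no neighbour in $T_0$. Since $G$ is connected and $T_0\neq\emptyset$, take a shortest path from $w$ to $V(T_0)$, say $w=p_0,p_1,\dots,p_k$ with $p_k\in V(T_0)$ and $k\ge2$. Then the vertex $w'=p_{k-2}$ lies in $Z_0$, has no neighbour in $T_0$, and has the neighbour $u=p_{k-1}\in Z_0$, which in turn has a neighbour $\ell=p_k$ in $T_0$. This is an available instance of (E3), contradicting the fact that stage $0$ has ended. For $i>0$, take $w\in Z_i\subseteq Z_0$. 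Then $w$ has a neighbour in $T_0\subseteq T_i$.

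\emph{Part (a).} Let $\ell\in Y_i$. If $|\NZ(\ell)|\ge\beta_i+1$ with respect to $T_i$, then (E2$_i$) is still available, which contradicts the termination of stage $i$. Hence $|\NZ(\ell)|\le\beta_i$.

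\emph{Part (b).} Let $w\in Z_i$. By the first claim $w$ has a neighbour $\ell$ in $T_i$, and $\ell$ is a leaf. If $|\NZ(w)|\ge\ad-i$, then the pair $(\ell,w)$ is an available instance of (E1$_i$), again a contradiction. So $|N(w)\cap Z_i|\le\ad-i-1$. Every other neighbour of $w$ lies in $T_i$ and is therefore a leaf. Since $\deg w\ge\delta$, $w$ has at least $\delta-\ad+i+1$ neighbours in $Y_i$.

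The only point needing care is the existence of a neighbour in $T_i$. The key observation is that (E3) runs only in stage $0$, but later stages only enlarge $T$, so the property from stage $0$ persists. The shortest-path step also needs some care: one has to check that the vertex at distance $2$ from $T_0$ really satisfies the hypotheses of (E3).
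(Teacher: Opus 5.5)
Your proof is correct and follows the paper's argument. The neighbour claim comes from (E3) being unavailable at the end of stage $0$, together with $T_0\subseteq T_i$ and $Z_i\subseteq Z_0$; the leaf claim comes from Observation~\ref{obs:inv}; and parts (a) and (b) come from (E2$_i$) and (E1$_i$) being unavailable. Your shortest-path construction is a more explicit version of the paper's one-line appeal to connectivity.
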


\begin{proof}
If a vertex of $Z_0$ had no neighbour in $T_0$ then, $G$ being connected, some vertex of $Z_0$ would have no neighbour in $T_0$ while having a neighbour that does, and (E3) would be available. Since $T_0\subseteq T_i$ and $Z_i\subseteq Z_0$, every vertex of $Z_i$ has a neighbour in $T_i$, and by Observation~\ref{obs:inv} all such neighbours are leaves.

(a) holds because (E2$_i$) is not available at the end of stage $i$. For (b), let $w\in Z_i$ and let $\ell$ be a leaf neighbour of $w$; if $w$ had at least $\ad-i$ neighbours in $Z_i$ then (E1$_i$) would be available for the pair $(\ell,w)$. Hence $|\NZ(w)|\le \ad-i-1$, and since $\deg_G(w)\ge\delta$ and all neighbours of $w$ in $T_i$ are leaves, $w$ has at least $\delta-\ad+i+1$ neighbours in $Y_i$.
\end{proof}

\begin{lemma}\label{lem:ratio}
Put $\kappa_i=\dfrac{2}{\ad-i-1}$ for $1\le i\le I_\delta$. Then
\[
x_0\le\frac{3}{\delta-2}\,y_0,
\qquad
x_i-x_{i-1}\le\kappa_i\,(y_i-y_{i-1})\quad(1\le i\le I_\delta).
\]
\end{lemma}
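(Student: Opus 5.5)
The plan is an amortized count: every extension operation changes $(x,y)$ by some $(\Delta x,\Delta y)$, and I will show that each operation available in stage $i$ satisfies $\Delta x\le\kappa_i\,\Delta y$, or $\Delta x\le\frac{3}{\delta-2}\Delta y$ in stage~$0$. Summing over the operations of a stage then gives the claim; for stage $0$ one also checks the initial star.

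The basic bookkeeping is as follows. Expanding a leaf $v$ with $k=|\NZ(v)|$ makes $v$ a non-leaf and adds $k$ new leaves, so $\Delta x=1$ and $\Delta y=k-1$. For a chain of expansions $v_1,v_2,\dots$, each $v_{j+1}$ is a leaf at the moment it is expanded, because it was just added as a child of $v_j$. The chain therefore contributes $\Delta x=t$, where $t$ is the number of expansions, and $\Delta y=(\text{number of vertices added})-t$.

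The initial star has $x=1$ and $y\ge\delta$, which is fine because $1\le 3\delta/(\delta-2)$. Then I treat each operation in turn.

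\emph{(E1$_i$).} Here $t=2$. The vertices added are $\NZ(\ell)\cup\NZ(w)$, computed before $\ell$ is expanded, since common neighbours are merely added earlier. This set contains $w$ and the at least $\ad-i$ vertices of $\NZ(w)$, so at least $\ad-i+1$ vertices are added. Hence $\Delta y\ge \ad-i-1$ and $\Delta x/\Delta y\le\kappa_i$.

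\emph{(E2$_i$).} Here $\Delta x=1$ and $\Delta y\ge\beta_i$, so I need $2\beta_i\ge \ad-i-1$. Write $2\delta-1=3\ad-r$ with $r\in\{0,1,2\}$ and $m=\ad-i-1$. Then $2\delta-4-3i=3m-r$, so $\beta_i=\lceil(3m-r)/6\rceil$.
\begin{itemize}
\item If $m=2s$, this ceiling equals $s$.
\item If $m=2s+1$, it equals $s+1$, because $0<3-r\le 3$.
\end{itemize}
In both cases $\beta_i\ge m/2$, which is what is needed. This residue computation is the only delicate point, since a crude estimate of the ceilings loses $2/3$.

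\emph{(E3).} Here $t=3$. Before the step, $w$ and all of $N(w)$ lie in $Z$, since $w$ has no neighbour in $T$. Every such vertex is added during the chain: $u$ is added by $\ell$, $w$ by $u$, and the rest either earlier or by $w$. So at least $\delta+1$ vertices are added and $\Delta y\ge\delta-2$, giving ratio $3/(\delta-2)$.

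For the remaining operations of stage $0$, I compare with $3/(\delta-2)$. By the above, (E1$_0$) has ratio at most $2/(\ad-1)$ and (E2$_0$) has ratio at most $1/\beta_0\le 2/(\ad-1)$. Since $\ad-1\ge(2\delta-4)/3$, we get $2/(\ad-1)\le 3/(\delta-2)$.

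Summing the inequalities over stage $0$, including the initial star, gives $x_0\le\frac{3}{\delta-2}y_0$. Summing over the operations of stage $i\ge1$ gives $x_i-x_{i-1}\le\kappa_i(y_i-y_{i-1})$. Throughout, $\ad-i-1\ge1$ is needed for $\kappa_i$ to make sense; this follows from $i\le I_\delta$.
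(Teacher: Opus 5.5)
Your proof is correct and follows the paper's argument: bound $\Delta x/\Delta y$ for each of (E1$_i$), (E2$_i$) and (E3), verify $\beta_i\ge(\ad-i-1)/2$ and $2/(\ad-1)\le 3/(\delta-2)$ (your residue computation spells out what the paper calls a ``direct check''), then sum over each stage together with the initial star. The one slip is that $\Delta x=t$ should read $\Delta x\le t$, with $\Delta y=(\text{vertices added})-\Delta x$: a later vertex of the chain can have no neighbours left in $Z$ when it is expanded and then stays a leaf, e.g.\ $w$ in (E1$_i$) when $\NZ(w)\subseteq\NZ(\ell)$, and this only improves the ratio, as the paper notes.
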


\begin{proof}
Let $\Delta x$ and $\Delta y$ be the changes in $x$ and $y$ caused by one extension. By Observation~\ref{obs:inv}, the vertex $\ell$ occurring in an extension is a leaf and $u,w\in Z$. An extension may add more leaves than the counts below record, since expanding a vertex adds all of its neighbours outside $T$ and not only the ones required by the condition of the extension; we use upper bounds on $\Delta x$ and lower bounds on $\Delta y$ throughout.

In (E1$_i$), the vertex $\ell$ becomes a non-leaf, and so does $w$ unless all its neighbours outside $T$ have already been added, in which case $w$ stays a leaf; hence $\Delta x\le2$. The vertices added are those of $\NZ(\ell)\cup\NZ(w)$, computed before the extension, and all of them except possibly $w$ are leaves afterwards; since $w\notin\NZ(w)$ and $|\NZ(w)|\ge\ad-i$, at least $\ad-i$ leaves are gained and only $\ell$ is lost, so $\Delta y\ge\ad-i-1$.

In (E2$_i$), $\Delta x=1$ and at least $\beta_i+1$ leaves are gained while $\ell$ is lost, so $\Delta y\ge\beta_i$.

In (E3), $\ell$ and $u$ become non-leaves and $w$ may or may not, so $\Delta x\le3$. All at least $\delta$ neighbours of $w$ lie in $Z$ before the extension, and all of them except $u$ are leaves afterwards, so $\Delta y\ge\delta-2$.

A direct check gives $\frac{\ad-i-1}{2}\le\beta_i$ for $0\le i\le I_\delta$, and $\frac{\delta-2}{3}\le\frac{\ad-1}{2}$. Hence every extension of stage $0$ satisfies $\Delta y\ge\frac{\delta-2}{3}\Delta x$, and every extension of stage $i\ge1$ satisfies $\Delta y\ge\frac{\ad-i-1}{2}\Delta x$. The initial star has $x=1$ and $y\ge\delta$, so $x\le\frac{3}{\delta-2}y$ holds initially and is preserved through stage $0$; summing the increments over the extensions of stage $i\ge1$ gives the second inequality.
\end{proof}

\begin{lemma}\label{lem:z}
For $0\le i\le I_\delta$ we have $z_i\le c_i\,y_i$, where
\[
c_i=\frac{\beta_i}{\delta-\ad+i+1} .
\]
\end{lemma}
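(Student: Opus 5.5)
We prove the bound by double counting the edges of $G$ that join $Y_i$ to $Z_i$, using the two degree bounds of Lemma~\ref{lem:struct}. Let $e(Y_i,Z_i)$ be the number of edges of $G$ with one end in $Y_i$ and the other in $Z_i$.

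First we count from the $Y_i$ side. By Lemma~\ref{lem:struct}(a), every vertex of $Y_i$ has at most $\beta_i$ neighbours in $Z_i$. Hence $e(Y_i,Z_i)\le\beta_i\,y_i$.

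Next we count from the $Z_i$ side. By Lemma~\ref{lem:struct}(b), every vertex $w\in Z_i$ has at most $\ad-i-1$ neighbours in $Z_i$. The lemma also says that all neighbours of $w$ in $T_i$ are leaves, so all of them lie in $Y_i$. Since $\deg_G(w)\ge\delta$, the vertex $w$ has at least $\delta-\ad+i+1$ neighbours in $Y_i$. Summing over $Z_i$ gives
\[
e(Y_i,Z_i)\ \ge\ (\delta-\ad+i+1)\,z_i .
\]

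Combining the two counts gives $(\delta-\ad+i+1)z_i\le\beta_i y_i$. It remains to divide by $\delta-\ad+i+1$, which requires this quantity to be positive. We have $\ad=\lceil(2\delta-1)/3\rceil\le(2\delta+1)/3$, so
\[
\delta-\ad+1\ \ge\ \frac{\delta+2}{3}\ >\ 0,
\]
and the same holds for every $i\ge0$. Dividing yields $z_i\le c_i\,y_i$.

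There is no real obstacle here, since Lemma~\ref{lem:struct} has already established both degree bounds. The only points needing care are that the neighbours of a $Z_i$-vertex in $T_i$ lie in $Y_i$ and not in $X_i$, which is part of Lemma~\ref{lem:struct}, and that the divisor is positive.
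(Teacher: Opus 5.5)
Your proof is correct and follows the paper's own argument: you double count the edges between $Z_i$ and $Y_i$, using Lemma~\ref{lem:struct}(a) for the upper bound $\beta_i y_i$ and Lemma~\ref{lem:struct}(b) for the lower bound $(\delta-\ad+i+1)z_i$. Your explicit check that $\delta-\ad+i+1\ge\frac{\delta+2}{3}>0$ is a small detail the paper leaves implicit.
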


\begin{proof}
Let $e$ be the number of edges of $G$ between $Z_i$ and $Y_i$. By Lemma~\ref{lem:struct}(b) each vertex of $Z_i$ has at least $\delta-\ad+i+1$ neighbours in $Y_i$, and by Lemma~\ref{lem:struct}(a) each vertex of $Y_i$ has at most $\beta_i$ neighbours in $Z_i$. Counting $e$ from both sides,
\[
(\delta-\ad+i+1)\,z_i\ \le\ e\ \le\ \beta_i\,y_i ,
\]
which is the assertion.
\end{proof}

We can now state the general bound. Recall that $\kappa_i=\frac{2}{\ad-i-1}$, so that $\kappa_1\le\kappa_2\le\cdots\le\kappa_{I_\delta}$.

\begin{theorem}\label{thm:main}
Let $\delta\ge7$ and let $G$ be a connected $n$-vertex graph with minimum degree at least $\delta$. Define $\eta_0,\eta_1,\dots,\eta_{I_\delta}$ by
\[
\eta_0=\frac{1}{1+\frac{3}{\delta-2}+c_0},
\qquad
\eta_i=\frac{1+\bigl(\kappa_1-\frac{3}{\delta-2}\bigr)\eta_0+\sum_{j=1}^{i-1}(\kappa_{j+1}-\kappa_j)\,\eta_j}{1+c_i+\kappa_i}
\quad(1\le i\le I_\delta).
\]
Then the algorithm of Section~\ref{sec:alg} produces a spanning tree of $G$ with at least $\eta_{I_\delta}n$ leaves; in particular $l(n,\delta)\ge\eta_{I_\delta}n$.
\end{theorem}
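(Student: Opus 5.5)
We will show by induction on $i$ that $y_i\ge\eta_i n$ for every $0\le i\le I_\delta$. The case $i=I_\delta$ together with Lemma~\ref{lem:final} then gives the theorem.

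\emph{Base case.} Start from $n=x_0+y_0+z_0$. By Lemma~\ref{lem:ratio}, $x_0\le\frac{3}{\delta-2}y_0$, and by Lemma~\ref{lem:z}, $z_0\le c_0y_0$. Hence
\[
n\le\Bigl(1+\tfrac{3}{\delta-2}+c_0\Bigr)y_0,
\]
that is, $y_0\ge\eta_0n$.

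\emph{Inductive step.} Fix $i\ge1$.

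\begin{enumerate}
\item[(1)] Telescope the second inequality of Lemma~\ref{lem:ratio}:
\[
x_i\le x_0+\sum_{j=1}^{i}\kappa_j(y_j-y_{j-1}).
\]
\item[(2)] Apply Abel summation to this sum. It equals
\[
\kappa_i y_i-\kappa_1y_0-\sum_{j=1}^{i-1}(\kappa_{j+1}-\kappa_j)\,y_j .
\]
\item[(3)] Combine with $x_0\le\frac{3}{\delta-2}y_0$ to obtain
\[
x_i\le\kappa_iy_i-\Bigl(\kappa_1-\tfrac{3}{\delta-2}\Bigr)y_0-\sum_{j=1}^{i-1}(\kappa_{j+1}-\kappa_j)\,y_j .
\]
\item[(4)] Insert this bound and $z_i\le c_iy_i$ (Lemma~\ref{lem:z}) into $n=x_i+y_i+z_i$:
\[
(1+c_i+\kappa_i)\,y_i\ \ge\ n+\Bigl(\kappa_1-\tfrac{3}{\delta-2}\Bigr)y_0+\sum_{j=1}^{i-1}(\kappa_{j+1}-\kappa_j)\,y_j .
\]
\item[(5)] Replace $y_0,\dots,y_{i-1}$ by their lower bounds $\eta_jn$ from the induction hypothesis. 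This yields exactly $y_i\ge\eta_in$.
\end{enumerate}

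\emph{Main obstacle: the signs in step (5).} Substituting lower bounds is valid only if every coefficient in front of a $y_j$ is nonnegative.

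\begin{itemize}
\item For $\kappa_{j+1}-\kappa_j\ge0$ this is the monotonicity $\kappa_1\le\cdots\le\kappa_{I_\delta}$ noted before the theorem. It holds because $\ad-i-1$ decreases in $i$ and stays positive for $i\le I_\delta$, which I would check from $\ad-I_\delta-1\ge1$.
\item For $\kappa_1\ge\frac{3}{\delta-2}$ we need $\frac{2}{\ad-2}\ge\frac{3}{\delta-2}$, i.e.\ $2\delta-4\ge3\ad-6$. Since $\ad\le\frac{2\delta+1}{3}$, this follows from $3\ad\le 2\delta+1\le 2\delta+2$.
\end{itemize}

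If either sign check failed, the fallback would be to drop the offending term using only $y_j\ge0$. That, however, would not reproduce the stated formula for $\eta_i$, so I expect both checks to go through.
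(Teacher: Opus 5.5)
Your proposal is correct and follows the paper's proof essentially step for step. It uses the same induction on $y_i\ge\eta_i n$, the same telescoping and Abel summation leading to the key inequality for $(1+c_i+\kappa_i)y_i$, and the same two sign checks: $\kappa_j$ nondecreasing, and $\kappa_1\ge\frac{3}{\delta-2}$ via $3\ad\le 2\delta+1$. The check you left pending, $\ad-I_\delta-1\ge1$, holds with equality because $I_\delta=\ad-2$.
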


\begin{proof}
We show by induction that $y_i\ge\eta_i n$ for $0\le i\le I_\delta$; the theorem then follows from Lemma~\ref{lem:final}.

For $i=0$, Lemmas~\ref{lem:ratio} and~\ref{lem:z} give $x_0\le\frac{3}{\delta-2}y_0$ and $z_0\le c_0y_0$, so
\[
n=x_0+y_0+z_0\le\Bigl(\frac{3}{\delta-2}+1+c_0\Bigr)y_0 ,
\]
that is, $y_0\ge\eta_0 n$.

Let $1\le i\le I_\delta$. Summing the increments in Lemma~\ref{lem:ratio} and rearranging,
\[
x_i\ \le\ x_0+\sum_{j=1}^{i}\kappa_j\,(y_j-y_{j-1})
\ =\ x_0-\kappa_1y_0+\sum_{j=1}^{i-1}(\kappa_j-\kappa_{j+1})\,y_j+\kappa_i y_i .
\]
Using $x_0\le\frac{3}{\delta-2}y_0$ and $z_i\le c_iy_i$ in $n=x_i+y_i+z_i$,
\[
n\ \le\ \Bigl(\frac{3}{\delta-2}-\kappa_1\Bigr)y_0+\sum_{j=1}^{i-1}(\kappa_j-\kappa_{j+1})\,y_j+(1+c_i+\kappa_i)\,y_i ,
\]
so that
\begin{equation}\label{eq:rec}
(1+c_i+\kappa_i)\,y_i\ \ge\ n+\Bigl(\kappa_1-\frac{3}{\delta-2}\Bigr)y_0+\sum_{j=1}^{i-1}(\kappa_{j+1}-\kappa_j)\,y_j .
\end{equation}
The sequence $(\kappa_j)$ is increasing, so $\kappa_{j+1}-\kappa_j\ge0$. Moreover $\ad-2\le\frac{2\delta-5}{3}$ gives
\[
\kappa_1=\frac{2}{\ad-2}\ \ge\ \frac{6}{2\delta-5}\ >\ \frac{3}{\delta-2},
\]
so the coefficient of $y_0$ in~\eqref{eq:rec} is positive as well. Hence the right-hand side of~\eqref{eq:rec} is nondecreasing in $y_0,\dots,y_{i-1}$, and substituting the inductive bounds $y_j\ge\eta_jn$ yields $y_i\ge\eta_i n$.
\end{proof}

\begin{corollary}\label{cor:seven}
For $\delta=7$ we have $\ad=5$, $I_7=3$, $(\beta_0,\beta_1,\beta_2,\beta_3)=(2,2,1,1)$ and $(c_0,c_1,c_2,c_3)=(\frac23,\frac12,\frac15,\frac16)$, and the recursion of Theorem~\ref{thm:main} gives
\[
\eta_0=\tfrac{15}{34},\qquad \eta_1=\tfrac{105}{221},\qquad \eta_2=\tfrac{2625}{4862},\qquad \eta_3=\tfrac{25200}{46189}>0.5455 .
\]
This proves Theorem~\ref{thm:7}.
\end{corollary}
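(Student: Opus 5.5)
The plan is to verify the parameter values directly from the definitions in Section~\ref{sec:alg}, evaluate the recursion of Theorem~\ref{thm:main} step by step in exact rational arithmetic, and then invoke Theorem~\ref{thm:main} with $\delta=7$. The polynomial-time claim comes from the algorithm's description in Section~\ref{sec:alg}.

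\emph{Parameters.} For $\delta=7$ we get $\ad=\lceil 13/3\rceil=5$ and $I_7=\lceil 7/3\rceil=3$. From $\beta_i=\lceil(10-3i)/6\rceil$ we get $\beta_0=\lceil 10/6\rceil=2$, $\beta_1=\lceil 7/6\rceil=2$, $\beta_2=\lceil 4/6\rceil=1$ and $\beta_3=\lceil 1/6\rceil=1$. The denominator in Lemma~\ref{lem:z} is $\delta-\ad+i+1=3+i$, so $c_i=\beta_i/(3+i)$ gives $(\frac23,\frac12,\frac15,\frac16)$. Further, $\kappa_i=2/(4-i)$ gives $\kappa_1=\frac23$, $\kappa_2=1$, $\kappa_3=2$, and $\frac{3}{\delta-2}=\frac35$.

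\emph{The recursion.} We need the coefficients $\kappa_1-\frac35=\frac1{15}$, $\kappa_2-\kappa_1=\frac13$ and $\kappa_3-\kappa_2=1$.
\begin{itemize}
\item $\eta_0=1/(1+\frac35+\frac23)=\frac{15}{34}$.
\item $\eta_1=\bigl(1+\frac1{15}\cdot\frac{15}{34}\bigr)\big/\bigl(1+\frac12+\frac23\bigr)=\frac{35}{34}\cdot\frac6{13}=\frac{105}{221}$.
\item $\eta_2=\bigl(1+\frac1{34}+\frac13\cdot\frac{105}{221}\bigr)\big/\frac{11}{5}$, which should simplify to $\frac{2625}{4862}$.
\item $\eta_3=\bigl(1+\frac1{34}+\frac13\eta_1+\eta_2\bigr)\big/\bigl(1+\frac16+2\bigr)$, where $1+\frac16+2=\frac{19}{6}$. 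This should simplify to $\frac{25200}{46189}$.
\end{itemize}
For the last two I will put everything over the common denominator $2\cdot 11\cdot 13\cdot 17=4862$, and then $46189=11\cdot13\cdot17\cdot19$.

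\emph{Conclusion.} A final decimal check, $0.5455\cdot 46189<25200$, gives the strict inequality. Theorem~\ref{thm:main} then yields a spanning tree with at least $\eta_3 n$ leaves, found in polynomial time, which is Theorem~\ref{thm:7}.

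The only real obstacle is arithmetic bookkeeping in $\eta_2$ and $\eta_3$. I would guard against slips by confirming that the factor $19$ in the last denominator comes from $1+c_3+\kappa_3=\frac{19}{6}$.
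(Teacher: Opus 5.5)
Your proposal is correct and follows the paper's approach: the corollary is simply the evaluation of the parameters and of the recursion of Theorem~\ref{thm:main} at $\delta=7$, followed by an appeal to that theorem and to the polynomial running time of the algorithm. The two simplifications you left as ``should simplify'' do check out: the numerator of $\eta_2$ is $\frac{35}{34}+\frac{35}{221}=\frac{525}{442}$, giving $\frac{525}{442}\cdot\frac{5}{11}=\frac{2625}{4862}$, and the numerator of $\eta_3$ is $\frac{525}{442}+\frac{2625}{4862}=\frac{8400}{4862}$, giving $\frac{8400}{4862}\cdot\frac{6}{19}=\frac{25200}{46189}$.
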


\begin{corollary}\label{cor:table}
Evaluating the recursion of Theorem~\ref{thm:main} in exact rational arithmetic gives the following lower bounds on $l(n,\delta)/n$. The last column is the value conjectured by Linial. Every entry improves the bound $\frac{11}{21}\approx0.5238$ inherited from $\delta=6$, which was the best previously known in each case.
\end{corollary}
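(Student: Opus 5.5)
The statement to be justified is Corollary~\ref{cor:table}: for each $8\le\delta\le25$ the recursion of Theorem~\ref{thm:main} yields a value $\eta_{I_\delta}$ exceeding $\frac{11}{21}$. Since Theorem~\ref{thm:main} already gives $l(n,\delta)\ge\eta_{I_\delta}n$, the task is purely computational. I would carry out three steps.

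\textbf{Step 1: compute the parameters.} For each $\delta$ in the range I will compute
\[
\ad=\lceil(2\delta-1)/3\rceil,\qquad I_\delta=\lceil(2\delta-7)/3\rceil,\qquad \beta_i=\lceil(2\delta-4-3i)/6\rceil,
\]
together with $c_i=\beta_i/(\delta-\ad+i+1)$ and $\kappa_i=2/(\ad-i-1)$. Before using them I will confirm that every $\beta_i$ is a positive integer and that $\ad-i-1\ge 1$ for all $i\le I_\delta$, so that each $\kappa_i$ is finite. For the latter, note that $\ad-I_\delta-1$ is at least $1$, since $\ad-I_\delta$ is about $2$.

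\textbf{Step 2: run the recursion exactly.} I will evaluate $\eta_0,\dots,\eta_{I_\delta}$ in exact rational arithmetic, exactly as in Corollary~\ref{cor:seven}, and record the decimal truncations in a table alongside Linial's value $(\delta-2)/(\delta+1)$. Truncating, rather than rounding, ensures that each displayed entry is a valid lower bound.

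\textbf{Step 3: compare with the inherited bound.} For the claim that every entry beats $\frac{11}{21}$, I would avoid checking each case separately and use monotonicity instead. By Corollary~\ref{cor:seven}, $\eta_{I_7}>0.5455>\frac{11}{21}$, and the computed values increase with $\delta$, starting from $0.5850$ at $\delta=8$. A cleaner argument needs no computation at all: a graph with minimum degree $\delta\ge8$ also has minimum degree $7$, so Theorem~\ref{thm:7} already gives $l(n,\delta)\ge0.5455n>\frac{11}{21}n$ for every such $\delta$, and the tabulated value is at least this large.

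\textbf{Main obstacle.} The only real difficulty is bookkeeping: the exact fractions have large denominators, so the computation must be done with a computer algebra system or rational arithmetic. A further subtlety is that for some $\delta$ the best bound might come from $\eta_{I_{\delta'}}$ for some $\delta'<\delta$. If so, the table should report the maximum of $\eta_{I_{\delta'}}$ over all $\delta'\le\delta$, but I expect the computation to show that the entries are monotone in $\delta$, so that this maximum is simply $\eta_{I_\delta}$.
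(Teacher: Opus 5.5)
Your Steps 1 and 2 are exactly the paper's justification: the table is simply the exact rational evaluation of the recursion of Theorem~\ref{thm:main} for each $\delta$, and the comparison with $\frac{11}{21}$ is read off from the resulting numbers. In Step 1 you can be exact rather than approximate: $\lceil\frac{2\delta-7}{3}\rceil=\lceil\frac{2\delta-1}{3}\rceil-2$, so $I_\delta=a_\delta-2$ and $\kappa_{I_\delta}=2$.

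You should drop the ``cleaner argument'' of Step 3, because it does not prove what the corollary asserts. The claim is about the tabulated numbers $\eta_{I_\delta}$, not about $l(n,\delta)$. Monotonicity of $l(n,\delta)$ in $\delta$, together with Theorem~\ref{thm:7}, shows that \emph{some} bound above $\frac{11}{21}n$ holds for every $\delta\ge8$. It says nothing about whether $\eta_{I_\delta}$ itself exceeds $\frac{11}{21}$. Your clause ``the tabulated value is at least this large'' is precisely the statement that requires the computation. Likewise, ``the computed values increase with $\delta$'' can only be observed after every entry has been computed, so it cannot replace the case check. Since Step 2 produces every $\eta_{I_\delta}$ exactly, simply compare each one with $\frac{11}{21}$. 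Your worry about reporting $\max_{\delta'\le\delta}\eta_{I_{\delta'}}$ is also irrelevant to correctness: each $\eta_{I_\delta}$ is a valid lower bound by Theorem~\ref{thm:main}, whether or not the sequence is monotone.

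Your insistence on truncation is in fact more careful than the printed table. Since $\frac{25200}{46189}=0.54558\ldots$, the entry $0.5456$ for $\delta=7$ is rounded up and slightly overstates the proven bound. Truncation gives $0.5455$, which agrees with Theorem~\ref{thm:7}. The entries for $\delta=8,9,10$, namely $\frac{399}{682}=0.58504\ldots$, $\frac{267344}{434625}=0.61511\ldots$ and $\frac{83592}{130339}=0.64134\ldots$, match the table under either convention.
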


\begin{center}
\begin{tabular}{ccc}
\toprule
$\delta$ & this paper & Linial's conjecture\\
\midrule
7 & 0.5456 & 0.6250 \\
8 & 0.5850 & 0.6667 \\
9 & 0.6151 & 0.7000 \\
10 & 0.6413 & 0.7273 \\
11 & 0.6644 & 0.7500 \\
12 & 0.6833 & 0.7692 \\
13 & 0.7003 & 0.7857 \\
14 & 0.7158 & 0.8000 \\
15 & 0.7289 & 0.8125 \\
16 & 0.7410 & 0.8235 \\
17 & 0.7522 & 0.8333 \\
18 & 0.7620 & 0.8421 \\
19 & 0.7711 & 0.8500 \\
20 & 0.7795 & 0.8571 \\
21 & 0.7871 & 0.8636 \\
22 & 0.7943 & 0.8696 \\
23 & 0.8010 & 0.8750 \\
24 & 0.8071 & 0.8800 \\
25 & 0.8128 & 0.8846 \\
\bottomrule
\end{tabular}
\end{center}

\begin{remark}
For $\delta=8$ the exact value is $\frac{399}{682}$; the exact values for larger $\delta$ are rationals of rapidly growing height and are omitted. The recursion is also valid for $\delta\le6$, but there it gives less than the known results of~\cite{KW,GW,Si}, which is why Theorem~\ref{thm:7} is stated for $\delta\ge7$.
\end{remark}

\begin{remark}
Each stage performs at most $n$ extensions, each extension is found in polynomial time, and the final attachment step is linear, so the algorithm runs in polynomial time.
\end{remark}

\section*{Acknowledgements}
The results of this paper, including the algorithm and its analysis, are due to the author. Generative AI tools were used in revising the exposition and in verifying the arithmetic of the lemmas and of the table in exact rational arithmetic. The author has independently verified every statement, proof and reference, and takes full responsibility for the correctness of the paper.

\section*{Funding}
This work was partially supported by the Woodward Fund for Applied Mathematics at San Jos\'e State University. The Woodward Fund is a gift from the estate of Mrs.\ Marie Woodward in memory of her son, Henry Teynham Woodward. He was an alumnus of the Mathematics Department at San Jos\'e State University and worked with research groups at NASA Ames.

\section*{Declarations}
The author declares that she has no conflict of interest. No datasets were generated or analysed.

\end{document}